\def\id{\mathbbm I}
\newcommand{\Tr}{{\rm Tr}\,}
\newcommand{\ort}{{\scriptscriptstyle \perp}}
\newcommand{\smb}{{\scriptscriptstyle B}}
\newcommand{\sma}{{\scriptscriptstyle A}}
\newcommand{\smc}{{\scriptscriptstyle C}}
\newcommand{\smab}{{\scriptscriptstyle AB}}
\newcommand{\ket}[1]{\vert #1 \rangle}
\else \usepackage{latexsym}\fi
\newcommand\black{\ensuremath{\blacktriangleright}}
\newcommand\white{\ensuremath{\vartriangleright}}
\newif\ifamsfontsloaded
  \newcommand\whbl{\white\kern-.1em--\kern-.1em\black}
  \newcommand\blwh{\black\kern-.1em--\kern-.1em\white}
  \newcommand\blbl{\black\kern-.1em--\kern-.1em\black}
  \newcommand\whwh{\white\kern-.1em--\kern-.1em\white}
       \newcommand\whbl{\white\kern-.125em--\kern-.125em\black}%
       \newcommand\blwh{\black\kern-.125em--\kern-.125em\white}%
       \newcommand\blbl{\black\kern-.125em--\kern-.125em\black}%
       \newcommand\whwh{\white\kern-.125em--\kern-.125em\white}}
\newtheorem{theorem}{Theorem}[section]
\begin{document}
\title{Quantum ensembles and the statistical operator: a tutorial}
\author[Yinxi\`u Zhan and Matteo G. A. Paris]
{Yinxi\`u Zhan and Matteo G. A. Paris\vspace*{2mm}\\
Dipartimento di Fisica dell'Universit\`a degli Studi di Milano, I-20133
Milano, Italia.  }
\correspond{Email: {\tt matteo.paris@fisica.unimi.it}}
\receiveddate{} %
\reviseddate{} %
\accepteddate{} %
\onlinepulishdate{} %
\paperurl{\bf} %
\setcounter{page}{1}
\label{firstpage}
\maketitle
\makesponsor %
\makecorrespond %
\makeusefuldate %
\begin{abstract}
The main purpose of this tutorial is to elucidate in details what should
be meant by ensemble of states in quantum mechanics, and to properly address 
the problem of discriminating, exactly or approximately, two
different ensembles. To this aim we review the notion and the definition
of {\em quantum ensemble} as well as its relationships with the concept
of {\em statistical operator} in quantum mechanics. We point out the
implicit assumptions contained in introducing a correspondence between
quantum ensembles and the corresponding single-particle statistical
operator, and discuss some issues arising when these assumptions are not
satisfied. We review some subtleties leading to apparent paradoxes, 
and illustrate the role of approximate quantum cloning. In particular, 
we review some examples of practical interest where different (but 
equivalent) preparations of a quantum system, i.e.  different ensembles 
corresponding to the same single-particle
statistical operator, may be successfully discriminated exploiting
multiparticle correlations, or some {\em a priori} knowledge about 
the number of particles in the ensemble.
\end{abstract}
\begin{keywords}
Quantum ensembles; statistical operator; density matrix; no-cloning
theorem.
\end{keywords}
\makepaperinfo\normalsize\parindent=7mm
\section{Introduction}
As matter of fact, fundamental postulates of quantum mechanics do not
allow to discriminate two ensembles corresponding the same statistical 
(density) operator. Approximate discrimination is also forbidden by 
fundamental laws of physics, since it would violate the no-signaling
condition imposed by causality. 
Despite this facts, one may find some claims in the literature about the
possible discrimination of ensembles corresponding to the same density
operator. In particular, specific schemes have been put 
forward \cite{ghi75,ghi76,per93,esp01}, 
involving finite set of particles sampled from a 
given ensemble, or ensembles prepared with inner correlations. 
\par
The main goal of this tutorial is to remove the apparent paradoxes by 
addressing in details what should
be meant by ensemble of states in quantum mechanics, and to address
properly the problem of discriminating, exactly or approximately, two 
different ensembles. To this purpose, we review the notion and the definition 
of {\em quantum ensemble} as well as its relationships with the concept 
of {\em statistical operator} in quantum mechanics. We point out the 
implicit assumptions contained in introducing a correspondence between 
quantum ensembles and the corresponding single-particle statistical 
operator, and review the apparent paradoxes arising 
when these assumptions are not satisfied. 
\par
As we will see, the paradoxes arise from mixing up the two different
concepts of single-particle and many-particles statistical operators or
to assume a fixed number of particle in the ensembles
\cite{new76,tol06,lon06}.  A detailed analysis of the measurement
schemes mentioned above shows that discrimination is indeed possible,
but also that the involved ensembles correspond to the same
single-particle density operator but different many-particles ones.  In
other words, there are no paradoxes unless the measurement schemes are
analyzed in naive way.
\par
The tutorial is structured as follows. In the next Section we review 
the definition of ensemble and statistical operator in quantum
mechanics, also emphasizing the implicit assumptions needed to 
establish a correspondence between 
quantum ensembles and the corresponding single-particle statistical 
operator. In Section \ref{s:qclo} we briefly discuss 
the connection between the no-cloning theorem and the impossibility of
discriminating ensembles with the same density operator and show
that also approximate approximate quantum cloning machines (AQCM)
cannot be used for this task. In Section \ref{s:exa} we discuss some
measurement schemes where discrimination of seemingly equivalent
ensembles is realized and show that this situation arise when the
implicit assumptions contained in introducing a correspondence between
quantum ensembles and the corresponding single-particle statistical
operator, are not satisfied. Section \ref{s:out} closes the tutorial
with some concluding remarks.
\section{The statistical operator of a quantum system}
According to the basic postulates of quantum mechanics, the states of a 
physical system are described by normalized vectors  $|\psi\rangle$,  
$\langle \psi|\psi\rangle=1$,  of a Hilbert space 
$H$. Composite systems, either made by more than one physical object 
or by the different degrees of freedom of the same entity, are described 
by tensor product $H_1 \otimes H_2 \otimes ...$ of the corresponding 
Hilbert spaces and the overall state of the system is a vector in the 
global space. As far as the Hilbert space description of physical
systems is adopted then we have the superposition principle, 
which says that if $|\psi_1\rangle$ and $|\psi_2\rangle$ 
are possible states of a system, then also any (normalized) linear 
combination $\alpha|\psi_1\rangle+\beta|\psi_2\rangle$, $\alpha,\beta\in
{\mathbbm C}$, $|\alpha|^2+|\beta|^2=1$ of the two states is an admissible 
state of the system.
\par
Observable quantities are described by Hermitian operators $X$. Any 
Hermitian operator $X=X^\dag$, admits a spectral decomposition 
$X=\sum_x x P_x$, in terms of its real eigenvalues $x$, which are
the possible value of the observable, and of the projectors 
$P_x=|x\rangle\langle x|$, $P_x,P_{x'}=\delta_{xx'}P_x$ on its
eigenvectors $X|x\rangle=x|x\rangle$, which form a basis for the
Hilbert space, i.e. a complete set of orthonormal states with the 
properties $\langle x|x'\rangle = \delta_{xx'}$ (orthonormality), 
and $\sum_x |x\rangle\langle x| =\id$ (completeness, we omitted to
indicate the dimension of the Hilbert space).  $L(H)$ is the linear 
space of (linear) operators from $H$ to $H$, which itself is a Hilbert s
pace with scalar product provided by the
trace operation, i.e. upon denoting by $|A\rangle\rangle$ operators seen
as elements of $L(H)$, we have $\langle\langle A| B\rangle\rangle =
\hbox{Tr}[A^\dag B]$.
\par
The probability $p_x$ of obtaining the outcome $x$ from the measurement of 
the observable $X$ and the overall expectation value $\langle X\rangle$ 
are given by
\begin{align}
p_x & = \left|\langle\psi| x\rangle\right|^2 
= \langle\psi| P_x| \psi \rangle
=\sum_n \langle \psi| \varphi_n\rangle\langle \varphi_n |P_x|\psi\rangle
\notag \\
&=\sum_n 
\langle \varphi_n |P_x|\psi\rangle
\langle \psi| \varphi_n\rangle
=
\hbox{Tr}\left[|\psi\rangle\langle\psi|\, P_x\right]
\end{align}
and $\langle X\rangle = \langle\psi| X|\psi\rangle =
\hbox{Tr}\left[|\psi\rangle\langle\psi|\, X \right]$.
This is the {\em Born rule} and it is the fundamental recipe to connect the
mathematical description of a quantum state to the prediction of quantum 
theory on the results of an experiment. 
The state of the system after 
the measurement is the projection of the state before the measurement on 
the eigenspace of the observed eigenvalue, i.e.
$$
|\psi_x\rangle = \frac{1}{\sqrt{p_x}}\, P_x |\psi\rangle
$$
\par
Let us nos suppose to deal with a quantum system whose preparation is not completely
un\-der con\-trol. What we know is that the system is prepared in the
state $|\psi_k\rangle$ with probability $p_k$, i.e. that the system
is described by the statistical ensemble $\{p_k, |\psi_k\rangle\}$,
$\sum_k p_k=1$, where the states $\{|\psi_k\rangle\}$ are not, in general, 
orthogonal. The expected value of an observable $X$ may be evaluated 
as follows
\begin{align}\notag
\langle X\rangle & = \sum_k p_k \langle X \rangle_k = \sum_k p_k
\langle\psi_k | X| \psi_k \rangle = \sum_{n\,p\,k} p_k
\langle\psi_k|\varphi_n\rangle\langle\varphi_n |
X|\varphi_p\rangle\langle\varphi_p| \psi_k \rangle 
\\ \notag &=
\sum_{n\,p\,k} p_k
\langle\varphi_p| \psi_k \rangle\langle\psi_k|\varphi_n\rangle
\langle\varphi_n |X|\varphi_p\rangle
= \sum_{n\,p} 
\langle\varphi_p| \varrho| \varphi_n\rangle
\langle\varphi_n |X|\varphi_p\rangle
\\ \notag 
&=\sum_{p} 
\langle\varphi_p| \varrho\,X|\varphi_p\rangle
= \hbox{Tr}\left[\varrho\,X\right]\,
\end{align}
where $$\varrho = \sum_k p_k\,|\psi_k\rangle\langle\psi_k |$$
is the {\em statistical (density) operator} of the system 
under investigation.
The $|\varphi_n\rangle$'s in the above formula are a basis for the
Hilbert space and we used the trick of suitably inserting two resolutions of the
identity $\id = \sum_n |\varphi_n\rangle\langle\varphi_n|$.   
The formula is of course trivial if the $|\psi_k\rangle$'s are themselves 
a basis or a subset of a basis.  
\par
\begin{theorem}
An operator $\varrho$ is the density operator 
associated to an ensemble $\{p_k, |\psi_k\rangle\}$ is and only if it 
is a positive (hence selfadjoint) operator with unit 
trace $\hbox{Tr}\left[\varrho\right]=1$. 
\end{theorem}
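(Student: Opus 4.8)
The plan is to prove the two implications of the biconditional separately, using the correspondence established in the preceding paragraph together with the spectral theorem.

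For the ``only if'' direction I would start from an arbitrary ensemble $\{p_k,\ket{\psi_k}\}$ with $p_k\geq 0$, $\sum_k p_k=1$, and $\braket{\psi_k}{\psi_k}=1$, and simply verify the three properties for $\varrho=\sum_k p_k\,\ketbra{\psi_k}{\psi_k}$. Selfadjointness is immediate because each $\ketbra{\psi_k}{\psi_k}$ is selfadjoint and the $p_k$ are real. Positivity follows by evaluating the quadratic form on an arbitrary $\ket{\phi}$: $\bra{\phi}\varrho\ket{\phi}=\sum_k p_k\,|\braket{\phi}{\psi_k}|^2\geq 0$. The unit-trace condition follows from linearity of the trace and normalization of the $\ket{\psi_k}$: $\Tr[\varrho]=\sum_k p_k\,\braket{\psi_k}{\psi_k}=\sum_k p_k=1$. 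I would also note in passing that on a complex Hilbert space positivity of the quadratic form already forces selfadjointness (via the polarization identity), which is the reason for the parenthetical ``hence selfadjoint''.

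For the ``if'' direction I would invoke the spectral theorem for the selfadjoint operator $\varrho$: there is an orthonormal basis $\{\ket{u_j}\}$ of eigenvectors with real eigenvalues $\lambda_j$ such that $\varrho=\sum_j \lambda_j\,\ketbra{u_j}{u_j}$. Positivity gives $\lambda_j=\bra{u_j}\varrho\ket{u_j}\geq 0$, and the unit-trace hypothesis gives $\sum_j \lambda_j=\Tr[\varrho]=1$. Hence the pair $\{\lambda_j,\ket{u_j}\}$ is a genuine ensemble — nonnegative weights summing to one, normalized states — whose associated statistical operator is exactly $\varrho$, which exhibits the required ensemble and closes the proof.

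There is no real technical obstacle here; the argument is routine once the spectral theorem is available. The point I would emphasize right after the proof — since it is precisely the phenomenon the remainder of the tutorial exploits — is that the ensemble reconstructed in the ``if'' direction is by no means unique: the eigen-ensemble is merely one canonical choice, and many inequivalent ensembles, built from non-orthogonal and possibly overcomplete families of states, yield the same $\varrho$.
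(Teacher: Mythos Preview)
Your proof is correct and follows essentially the same route as the paper: the forward direction by computing $\langle\varphi|\varrho|\varphi\rangle=\sum_k p_k |\langle\varphi|\psi_k\rangle|^2\geq 0$ and $\Tr[\varrho]=\sum_k p_k=1$, and the converse by diagonalizing $\varrho$ and reading off the eigen-ensemble. Your version is slightly more detailed (you single out selfadjointness and note that positivity already forces it), but the argument is the same.
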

\begin{proof}
If $\varrho=\sum_k p_k |\psi_k\rangle\langle\psi_k |$
is a density operator then $\hbox{Tr}[\varrho]=\sum_k p_k=1$ and
for any vector $|\varphi\rangle \in H$,
$\langle\varphi|\varrho|\varphi\rangle = \sum_k p_k
|\langle\varphi|\psi_k\rangle|^2 \geq 0$. Viceversa, if 
$\varrho$ is a positive operator with unit trace than it can be
diagonalized and the sum of eigenvalues is equal to one. Thus it can be
naturally associated to an ensemble. 
\end{proof}
As it is true for any operator, the density operator may be expressed in terms
of its matrix elements in a given basis, i.e. $\varrho=\sum_{np}
\varrho_{np} |\varphi_n\rangle\langle\varphi_p|$ where
$\varrho_{np}=\langle\varphi_n|\varrho|\varphi_p\rangle$ is usually
referred to as the {\em density matrix} of the system.
Of course, the density matrix of a state is diagonal if we use
a basis which coincides or includes the set of eigenvectors of the
density operator, whereas it contains off-diagonal elements otherwise.
\par
Different ensembles may lead to the same density operator. In this case 
they have the same expectation values for any operator and thus are physically
indistinguishable. In other words, 
different preparations of ensembles leading to the same density operator
are actually the same state, i.e. the density operator appears to provide 
the natural and fundamental quantum description of physical systems
\cite{esp95,fan57}.
\par
How this reconciles with the above postulate 
saying "physical systems are described by vectors in a Hilbert space"? 
\par
In order to see how it works let us first notice that according to the 
postulates above the action of "measuring nothing" should be described by 
the identity operator $\id$. Indeed the identity is Hermitian and has the 
single eigenvalues $1$, corresponding to the persistent result of measuring 
nothing. Besides, the eigenprojector corresponding to the eigenvalue $1$ is 
the projector over the whole Hilbert space and thus we have the
consistent prediction that the state after the "measurement" is
left unchanged. Let us consider a situation in which a bipartite system
prepared in the state $|\psi_\smab\rangle\rangle \in H_{\sma} \otimes H_{\smb}$
is subjected to the measurement of an observable $X=\sum_x P_x \in L(H_\sma)
$, $P_x=|x\rangle\langle x|$ i.e. a measurement involving only the degree 
of freedom described by the Hilbert space $H_\sma$. The overall observable
measured on the system is thus $\boldsymbol{X}=X\otimes \id_\smb$, with 
spectral decomposition  $\boldsymbol{X}= \sum_x x\, \boldsymbol{Q}_x$, 
$\boldsymbol{Q}_x=P_x\otimes \id_\smb$. The probability distribution 
of the outcomes is then obtained using the Born rule, i.e.
\begin{align}
\label{star}
p_x = \hbox{Tr}_\smab
\Big[|\psi_\smab\rangle\rangle\langle\langle\psi_\smab |\,
P_x \otimes \id_\smb\Big] 
\end{align}
On the other hand, since the measurement has been performed only on the
system $A$ one expects the Born rule to be valid also at the level
of single system and a question arises on the form of the object 
$\varrho_\sma$ which allows one to write  
$p_x = \hbox{Tr}_\sma \left[\varrho_\sma\, P_x\right]$
i.e. the Born rule as a trace only over the Hilbert space $H_\sma$. 
Upon inspecting Eq. (\ref{star}), one sees that a suitable  mapping  
$|\psi_\smab\rangle\rangle\langle\langle\psi_\smab | \rightarrow \varrho_\sma$ is
provided by the partial trace 
$\varrho_\sma=\hbox{Tr}_\smb\left[|\psi_\smab\rangle\rangle\langle
\langle\psi_\smab |\right]$. Indeed, for the operator $\varrho_\sma$ 
defined by the above partial trace we have $\hbox{Tr}_\sma[\varrho_\sma]=
\hbox{Tr}_\smab\left[|\psi_\smab\rangle\rangle\langle
\langle\psi_\smab |\right]=1$ and, for any vector $|\varphi\rangle\in
H_\sma$ , 
$\langle\varphi_\sma|\varrho_\sma|\varphi_\sma\rangle = \hbox{Tr}_\smab
\left[|\psi_\smab\rangle\rangle\langle
\langle\psi_\smab |\,
|\varphi_\sma\rangle\langle\varphi_\sma|\otimes
\id_\smb\right] \geq 0$. Being a positive, unit trace, operator
$\varrho_\sma$ is itself a density operator according to the 
definition above. It should be also noticed that
actually, the partial trace is the unique
operation which allows to maintain the Born rule at both level i.e. the
unique operation leading to the correct description of observable
quantities for subsystems of a composite system. Let us state this as a
theorem \cite{Nie00}: 
\begin{theorem}
The unique mapping 
$|\psi_\smab\rangle\rangle\langle\langle\psi_\smab | \rightarrow \varrho_\sma =
f(\psi_\smab)$ from $H_\sma \otimes H_\smb$ to $H_\sma$ for which  
$\hbox{Tr}_\smab
\left[|\psi_\smab\rangle\rangle\langle\langle\psi_\smab |\,
P_x \otimes \id_\smb\right] = \hbox{Tr}_\sma \left[f(\psi_\smab)\,
P_x\right]$ is the partial trace $f(\psi_\smab)\equiv \varrho_\sma = 
\hbox{Tr}_\smb\left[|\psi_\smab\rangle\rangle\langle
\langle\psi_\smab |\right]$. 
\end{theorem}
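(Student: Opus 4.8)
The plan is to separate the statement into its two halves. Existence has in fact already been checked in the paragraph preceding the theorem: the partial trace $\varrho_\sma=\hbox{Tr}_\smb[|\psi_\smab\rangle\rangle\langle\langle\psi_\smab|]$ is a positive, unit--trace operator and, by the manipulation leading to Eq.~(\ref{star}), it satisfies $\hbox{Tr}_\smab[|\psi_\smab\rangle\rangle\langle\langle\psi_\smab|\,P_x\otimes\id_\smb]=\hbox{Tr}_\sma[\varrho_\sma\,P_x]$. So the only thing left is \emph{uniqueness}, and the idea is to show that the defining identity already over-determines $f$: requiring it to hold for \emph{every} rank-one projector $P_x=|x\rangle\langle x|$ on $H_\sma$ leaves $f(\psi_\smab)$ no freedom at all.

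Concretely, I would first fix $|\psi_\smab\rangle\rangle$ and read the identity as the statement that, for every unit vector $|x\rangle\in H_\sma$, the diagonal matrix element $\langle x|f(\psi_\smab)|x\rangle=\hbox{Tr}_\smab[|\psi_\smab\rangle\rangle\langle\langle\psi_\smab|\,|x\rangle\langle x|\otimes\id_\smb]$ is pinned down. Then I would suppose $f_1$ and $f_2$ are two mappings both obeying the hypothesis and put $g(\psi_\smab)=f_1(\psi_\smab)-f_2(\psi_\smab)$, so that $\langle x|g(\psi_\smab)|x\rangle=0$ for all unit $|x\rangle$, hence (by homogeneity) for all $|x\rangle\in H_\sma$. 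The key step is the elementary fact that an operator $A$ on a complex Hilbert space whose quadratic form $|x\rangle\mapsto\langle x|A|x\rangle$ vanishes identically is itself zero; I would prove it by polarization, writing $\langle x|A|y\rangle$ as the fixed linear combination $\tfrac14\sum_{k=0}^{3} i^{-k}\langle x+i^k y|A|x+i^k y\rangle$ of four such vanishing numbers, so that all matrix elements of $A$ are zero. Applying this to $A=g(\psi_\smab)$ gives $f_1=f_2$, and since the partial trace is an admissible choice it is the unique one.

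I do not expect a genuine obstacle, only one point that needs care: the hypothesis constrains $f$ solely through projectors, so one cannot invoke non-degeneracy of the Hilbert--Schmidt pairing $\langle\langle A|B\rangle\rangle=\hbox{Tr}[A^\dag B]$ on $L(H_\sma)$ until one has first observed that the rank-one projectors $\{|x\rangle\langle x|\}$ linearly span $L(H_\sma)$ over $\mathbbm C$ --- which is exactly what the polarization trick encodes. Equivalently, and slightly more cleanly, once $\{|x\rangle\langle x|\}$ is known to span $L(H_\sma)$ the hypothesis extended by linearity yields $\hbox{Tr}_\sma[g(\psi_\smab)\,M]=0$ for every $M\in L(H_\sma)$, and non-degeneracy of the trace form forces $g(\psi_\smab)=0$. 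It is worth stressing that no linearity or continuity assumption on $f$ is required in this argument: the value $f(\psi_\smab)$ is fixed pointwise for each $|\psi_\smab\rangle\rangle$; if one does assume $f$ linear, the same conclusion follows just as quickly by testing the identity on a fixed operator basis of $H_\sma\otimes H_\smb$ built from product projectors.
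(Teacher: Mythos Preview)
Your argument is correct. Both you and the paper rest on the same underlying fact---non-degeneracy of the Hilbert--Schmidt pairing on $L(H_\sma)$---but you implement it differently. The paper picks an operator basis $\{M_k\}$ for $L(H_\sma)$, expands $f(\psi_\smab)=\sum_k M_k\,\hbox{Tr}_\sma[M_k^\dag f(\psi_\smab)]$, and then replaces each coefficient $\hbox{Tr}_\sma[M_k^\dag f(\psi_\smab)]$ by $\hbox{Tr}_\smab[M_k^\dag\otimes\id_\smb\,|\psi_\smab\rangle\rangle\langle\langle\psi_\smab|]$ using the hypothesis, so that the expansion is uniquely determined. Your route---take the difference $g=f_1-f_2$, observe its quadratic form vanishes, and kill it by polarization---is more elementary in that it avoids choosing a basis, and it is more careful on exactly the point you flag: the hypothesis is stated only for rank-one projectors $P_x$, so invoking it with a generic $M_k^\dag$ (as the paper does) tacitly presupposes that those projectors span $L(H_\sma)$. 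Your polarization identity is precisely what supplies that span, so your proof makes explicit the step the paper's proof takes for granted.
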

\begin{proof} Basically the proof reduces to the fact that the 
set of operators on $H_\sma$ is itself a Hilbert space $L(H_\sma)$ 
with scalar product given by $\langle\langle A| B\rangle\rangle = 
\hbox{Tr}[A^\dag B]$. Indeed, let us consider a basis of operators
$\{M_k\}$ for $L(H_\sma)$ and expand $f(\psi_\smab) =\sum_k M_k
\hbox{Tr}_\sma[M_k^\dag f(\psi_\smab)]$. Since $f$ is the map to
preserve the Born rule we have 
$$
f(\psi_\smab) =\sum_k M_k
\hbox{Tr}_\sma[M_k^\dag\, f(\psi_\smab)]
= \sum_k M_k
\hbox{Tr}_\smab\left[M_k^\dag\otimes\id_\smb\,
|\psi_\smab\rangle\rangle\langle\langle\psi_\smab |\right]\,
$$
and the thesis follows from the fact that in a Hilbert space the
decomposition on a basis is unique. 
\end{proof}
The above result can be easily generalized to the case of a system 
which is initially described by a density operator $\varrho_\smab$ 
and thus we conclude that when we focus attention to a subsystem of 
a composite larger system the unique mathematical description of 
ignoring part of the degrees of freedom is provided by the partial trace. 
It remains to be proved that the partial trace of a density operator 
is a density operator too. This is a very consequence of the definition
that we put in form a little theorem. 
\begin{theorem} The partial traces 
$\varrho_\sma = \hbox{Tr}_\smb[\varrho_\smab]$, 
$\varrho_\smb = \hbox{Tr}_\sma[\varrho_\smab]$ 
of the density operator of a bipartite system are
themselves density operators for the reduced systems.
\end{theorem}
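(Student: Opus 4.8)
The plan is to verify that $\varrho_\sma=\hbox{Tr}_\smb[\varrho_\smab]$ satisfies the two requirements isolated in the criterion of the first theorem above --- positivity and unit trace --- so that it is automatically a genuine density operator; the statement for $\varrho_\smb=\hbox{Tr}_\sma[\varrho_\smab]$ then follows by interchanging the roles of $A$ and $B$. The whole argument is a short combination of the linearity of the partial trace with the single-vector computation already carried out just below Eq.~(\ref{star}).

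The trace condition is immediate: iterating partial traces reproduces the full trace, $\hbox{Tr}_\sma\big[\hbox{Tr}_\smb[\,\cdot\,]\big]=\hbox{Tr}_\smab[\,\cdot\,]$, which is a direct consequence of the completeness relation for a product basis, and hence $\hbox{Tr}_\sma[\varrho_\sma]=\hbox{Tr}_\smab[\varrho_\smab]=1$. For positivity I would fix an arbitrary $|\varphi\rangle\in H_\sma$ and rewrite, exactly as was done there for the pure-state case,
$$
\langle\varphi|\varrho_\sma|\varphi\rangle=\hbox{Tr}_\smab\big[\varrho_\smab\,|\varphi\rangle\langle\varphi|\otimes\id_\smb\big]\,.
$$
Since $\varrho_\smab$ is positive and $|\varphi\rangle\langle\varphi|\otimes\id_\smb$ is a positive operator on $H_\sma\otimes H_\smb$, the right-hand side is the trace of a product of two positive operators and is therefore non-negative. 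An equivalent route is to spectrally decompose $\varrho_\smab=\sum_j\lambda_j|\eta_j\rangle\rangle\langle\langle\eta_j|$ with $\lambda_j\geq0$, $\sum_j\lambda_j=1$, apply the linearity of $\hbox{Tr}_\smb$, and observe that each $\hbox{Tr}_\smb[|\eta_j\rangle\rangle\langle\langle\eta_j|]$ is a density operator by the single-vector computation recalled above, so that $\varrho_\sma$ is a convex combination of density operators. Either way the criterion above identifies $\varrho_\sma$, and symmetrically $\varrho_\smb$, as a density operator.

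The only ingredient here that is not purely mechanical --- the \emph{main obstacle}, such as it is for so elementary a statement --- is the fact that $\hbox{Tr}[PQ]\geq0$ whenever $P,Q\geq0$ (write $P=\sqrt{P}\,\sqrt{P}$ and use cyclicity of the trace, or else avoid the fact entirely by taking the convex-combination route), together with the observation that $|\varphi\rangle\langle\varphi|\otimes\id_\smb\geq0$. Both are routine once the Hilbert-space structure of $L(H)$ introduced in this section is at hand, and with them the theorem follows.
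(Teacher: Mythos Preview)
Your proof is correct and follows essentially the same route as the paper's: verify unit trace via $\hbox{Tr}_\sma[\varrho_\sma]=\hbox{Tr}_\smab[\varrho_\smab]=1$ and positivity via the identity $\langle\varphi|\varrho_\sma|\varphi\rangle=\hbox{Tr}_\smab\big[\varrho_\smab\,|\varphi\rangle\langle\varphi|\otimes\id_\smb\big]\geq0$. You are in fact slightly more explicit than the paper, which simply asserts the inequality without isolating the underlying fact $\hbox{Tr}[PQ]\geq0$ for $P,Q\geq0$ or offering the convex-combination alternative.
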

\begin{proof} We have 
$\hbox{Tr}_\sma [\varrho_\sma] = \hbox{Tr}_\smb [\varrho_\smb] =
\hbox{Tr}_\smab[\varrho_\smab]=1$ 
and, for any state 
$|\varphi_\sma\rangle\in H_\sma$, $|\varphi_\smb\rangle\in H_\smb$,
\begin{align}
\langle\varphi_\sma|\varrho_\sma|\varphi_\sma\rangle &= \hbox{Tr}_\smab
\left[\varrho_\smab\, |\varphi_\sma\rangle\langle\varphi_\sma|\otimes
\id_\smb\right] \geq 0 \notag \\
\langle\varphi_\smb|\varrho_\smb|\varphi_\smb\rangle &= \hbox{Tr}_\smab
\left[\varrho_\smab\,\id_\sma \otimes
|\varphi_\smb\rangle\langle\varphi_\smb|
\right] \geq 0 \notag\,, 
\end{align}
which prove  positivity of both $\varrho_\sma$ and $\varrho_\smb$.
\end{proof}
It also follows that the state of the "unmeasured" subsystem after 
the observation of a specific outcome may be obtained as a partial
trace of the projection of the state before the measurement on the 
eigenspace of the observed eigenvalue, i.e.
$$
\varrho_{\smb x} = \frac{1}{p_x} \hbox{Tr}_\sma\left[ P_x\otimes\id_\smb
\,\varrho_\smab\, P_x\otimes\id_\smb \right]
= \frac{1}{p_x} \hbox{Tr}_\sma\left[\varrho_\smab\, P_x\otimes\id_\smb
\right]\,
$$
where, to write the second equality, we made use of the circularity of
the trace and of the fact that we are in presence of a factorized
projector. This is also referred to as the "conditional state" of
system $B$ after the observation of the outcome $x$ from a measurement 
of the observable $X$ performed on the system $A$.
\subsection{Discussion}
In several textbooks a distinction is made between ensembles coming from
the ignorance about the preparation of a system and those emerging from
measurements performed on bipartite systems. They are usually referred to
as ensembles of the {\em proper} and {\em improper} kind respectively
\cite{coh99}.
Actually, as it emerges clearly from the derivation reported above,
there is no fundamental difference between the two kinds of ensembles
and this classification is somehow artificial (though it has been useful 
in the development of the field).
\par
The emerging definition is the following: a {\em quantum ensemble} is a 
collection of repeated identical preparations of the system, randomly 
generated according to a given probability distribution. When this
definition is applicable then we have the fundamental result reported 
above: two ensembles corresponding to the same statistical operator
cannot be discriminated by any kind of measurement, i.e. they are
physically indistinguishable and do not correspond to different 
physical entities \cite{par12}. 
\par
We want to emphasize, however, that the above definition contains two implicit
assumptions that may not be verified in all the physical situation of
interest. They are: i) the preparations are identical and random, i.e.
no correlations are present within the ensemble, e.g. between
subsequent preparations; ii) the number of preparations is not
known or fixed: strictly speaking statistical operators describe ensembles
made of infinite preparations, though this should be intended as {\em 
large enough} for the advent of the law of large numbers. 
Whenever the two conditions are not fulfilled the correspondence between 
ensembles and (single-particle) 
statistical operators is no longer ensured and (apparent)
paradoxical situations may arises. 
\section{Quantum cloning and discrimination of ensembles}\label{s:qclo}
Before addressing the issues arising when the above assumptions are not
satisfied, we briefly review and discuss the connections between i) the
impossibility of discriminating ensembles with the same density operator
and ii) the impossibility of perfectly replicating quantum information, i.e. the
so-called no-cloning theorem. We also show that even approximate 
discrimination is not possible since this would violate the no-signaling 
condition imposed by causality. Quite obviously, feasible (approximate) 
quantum cloning machines, which have been designed to fulfill 
this condition, cannot be employed as well for discriminating ensembles 
\cite{woo82,buz01,ant05}.
\par
Let us start by reviewing the no-cloning theorem in its general form
\begin{theorem}
There is no unitary operation $U$ on $H_\sma\otimes 
H_\smb\otimes H_\smc$ that for given $\ket{\omega}_\smb$ and
$\ket{A}_\smc$ 
is able to implement the transformation $U\ket{\psi}_\sma
\otimes\ket{\omega}_\smb \otimes
\ket{A}_\smc = 
\ket{\psi}_\sma\otimes\ket{\psi}_\smb\otimes\ket{A_\psi}_\smc$ 
for any $\ket{\psi}_\sma$.
\end{theorem}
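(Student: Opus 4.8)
The plan is a proof by contradiction based on the elementary fact, recalled above, that a unitary transformation preserves scalar products. Assume such a $U$ exists and pick two arbitrary normalized states $\ket{\psi}_\sma$ and $\ket{\phi}_\sma$. Applying the hypothetical cloning identity to the two inputs $\ket{\psi}_\sma\otimes\ket{\omega}_\smb\otimes\ket{A}_\smc$ and $\ket{\phi}_\sma\otimes\ket{\omega}_\smb\otimes\ket{A}_\smc$, and then taking the scalar product of the two output vectors, unitarity forces the output overlap to equal the input overlap.

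On the input side, since $\ket{\omega}_\smb$ and $\ket{A}_\smc$ are fixed and normalized, the overlap is simply $\braket{\psi}{\phi}$. On the output side it factorizes as $\braket{\psi}{\phi}\,\braket{\psi}{\phi}\,\braket{A_\psi}{A_\phi}=\braket{\psi}{\phi}^2\,\braket{A_\psi}{A_\phi}$, where one must keep track of the fact that the residual ancilla state $\ket{A_\psi}_\smc$ may depend on the cloned state. Equating the two,
\begin{equation}
\braket{\psi}{\phi} = \braket{\psi}{\phi}^2\,\braket{A_\psi}{A_\phi}\,.
\end{equation}
Taking moduli and using $|\braket{A_\psi}{A_\phi}|\leq 1$ (Cauchy--Schwarz, both vectors being normalized) gives $|\braket{\psi}{\phi}|\leq|\braket{\psi}{\phi}|^2$, hence $|\braket{\psi}{\phi}|\in\{0,1\}$: any two clonable states must be either orthogonal or equal up to a global phase. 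Since whenever $\dim H_\sma\geq 2$ there exist pairs of states with $0<|\braket{\psi}{\phi}|<1$, this is a contradiction and no such $U$ can exist.

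The step that carries the real content is recognising that nothing may be assumed about the output ancilla $\ket{A_\psi}_\smc$ beyond normalization, so that the only admissible estimate is $|\braket{A_\psi}{A_\phi}|\leq 1$; equally essential is the hypothesis that $\ket{\omega}_\smb$ and $\ket{A}_\smc$ are the \emph{same} for every input, which is precisely what collapses the left-hand overlap to the bare $\braket{\psi}{\phi}$. Beyond these bookkeeping points I expect no serious obstacle. An alternative derivation invokes the linearity of $U$ directly: feeding in a superposition $\ket{\psi}_\sma=\alpha\ket{u}_\sma+\beta\ket{v}_\sma$ with $\ket{u},\ket{v}$ orthonormal, linearity yields $\alpha\ket{u}_\sma\ket{u}_\smb\ket{A_u}_\smc+\beta\ket{v}_\sma\ket{v}_\smb\ket{A_v}_\smc$, which for generic $\alpha,\beta$ is manifestly not of the product-clone form $\ket{\psi}_\sma\otimes\ket{\psi}_\smb\otimes\ket{A_\psi}_\smc$; the scalar-product argument above is however preferable, since it needs no choice of basis and pinpoints exactly which families of states evade the theorem.
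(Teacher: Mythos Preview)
Your proof is correct, but it follows a different route from the paper's own argument. The paper invokes only the \emph{linearity} of $U$: it assumes the device clones two fixed states $\ket{\varphi_0}$ and $\ket{\varphi_1}$, applies $U$ to their superposition, and observes that the resulting entangled output $\ket{\varphi_0}\ket{\varphi_0}\ket{A_0}+\ket{\varphi_1}\ket{\varphi_1}\ket{A_1}$ is not the product-clone form required. This is precisely the ``alternative derivation'' you sketch at the end of your proposal. Your primary argument instead exploits the stronger hypothesis that $U$ is \emph{unitary}, via preservation of scalar products, and yields the sharper conclusion $|\braket{\psi}{\phi}|\in\{0,1\}$, i.e.\ that cloning can succeed only on mutually orthogonal families. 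The trade-off is that the paper's linearity argument is in a sense more robust---it would rule out cloning even by a hypothetical non-unitary linear map---while yours pinpoints exactly which restricted sets of states escape the prohibition.
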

From the operational point of view the theorem says that no
physical device (initially prepared in the state 
$|A\rangle\in H_\smc$) may obtain two copies
of a generic quantum state $|\psi\rangle_\sma$ starting from 
a single copy and by coupling the system under investigation to 
an ancillary system $B$ of the same dimension.
\begin{proof}
The proof is based on the sole request of linearity of quantum mechanics.
In fact if we require the device to work for a pair of states
$|\varphi_0\rangle$  and $|\varphi_1\rangle$, i.e
\begin{align}
U\, \ket{\varphi_0}_\sma \otimes \ket{\omega}_\smb \otimes \ket{A}_\smc
&=\ket{\varphi_0}_\sma \otimes \ket{\varphi_0}_\smb \otimes \ket{A_0}_\smc
\notag \\
U\, \ket{\varphi_1}_\sma \otimes \ket{\omega}_\smb \otimes \ket{A}_\smc
&=\ket{\varphi_1}_\sma \otimes \ket{\varphi_1}_\smb \otimes
\ket{A_1}_\smc\,,\notag
\end{align}
then, by linearity, one has 
$$
U\, \left(\ket{\varphi_0}_\sma  + \ket{\varphi_1}_\sma\right) 
\otimes \ket{\omega}_\smb \otimes \ket{A}_\smc
= \ket{\varphi_0}_\sma \otimes \ket{\varphi_0}_\smb \otimes \ket{A_0}_\smc
+ \ket{\varphi_1}_\sma \otimes \ket{\varphi_1}_\smb \otimes
\ket{A_1}_\smc\,,
$$
which is not what we are expecting from a cloning device, since 
the cloning of a superposition should correspond to
$$
U\, 
\frac1{\sqrt{2}}\left(\ket{\varphi_0}_\sma  + \ket{\varphi_1}_\sma\right) 
\otimes \ket{\omega}_\smb \otimes \ket{A}_\smc
=
\frac1{\sqrt{2}}\left(\ket{\varphi_0}_\sma  + \ket{\varphi_1}_\sma\right) 
\otimes
\frac1{\sqrt{2}}\left(\ket{\varphi_0}_\smb  + \ket{\varphi_1}_\smb\right) 
\otimes \ket{A_{01}}_\smc\,.$$
In other words, linearity of quantum mechanics forbids the existence of 
a cloning machine for any unitary on $H_\sma\otimes  H_\smb\otimes H_\smc$, 
i.e. any map on $H_\sma\otimes H_\smb$.
\end{proof}
\begin{figure}
\includegraphics[width=0.5\textwidth]{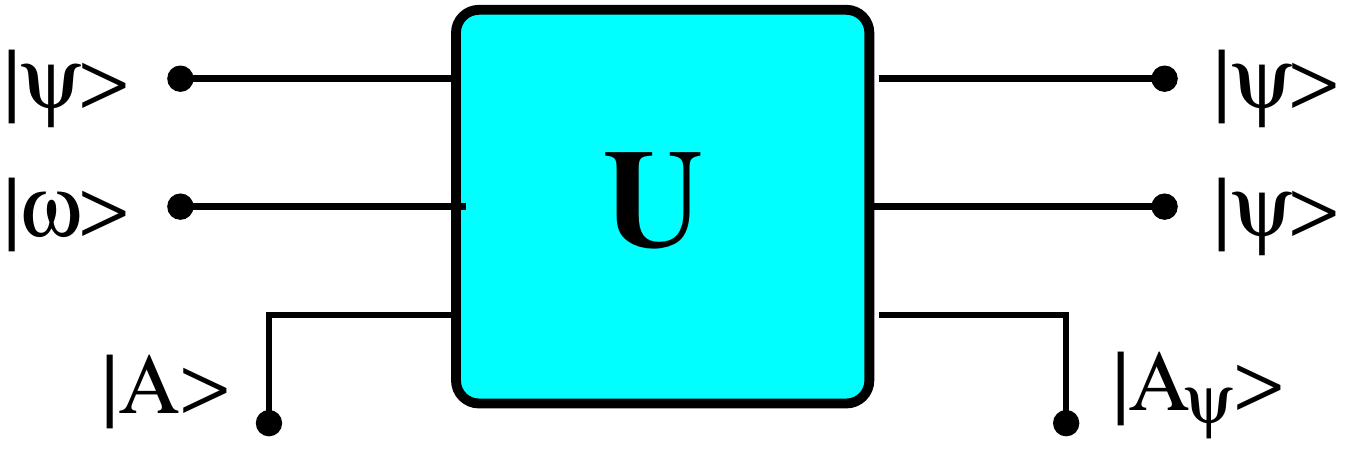}
\label{f:cl2}
\caption{Schematic diagram of a hypothetical cloning device including the 
degrees of freedom of the apparatus. Linearity of quantum mechanics
forbids the existence of this kind of device.
} 
\end{figure}
\par 
It is well known from the discussion about Bell's inequalities 
that quantum nonlocality cannot be used to implement any kind
of superluminal communication and, in turn, to violate
causality (no-signaling condition), even in conjunction with the
reduction postulate \cite{ant05}.  
Let us now give some more details in order to connect this fact 
with the impossibility of quantum cloning and of discrimination 
of ensembles corresponding to the same statistical operator. 
Indeed, it is known as the no-cloning
theorem was triggered as a response to a (wrong) proposal for
superluminal communication named FLASH (first light amplification
superluminal hookup) put forward by N. Herbert in 1982 \cite{her82}. 
\par
The argument goes as follows: Let us assume that Alice and Bob share an 
entangled state of the 
form $|\psi\rangle\rangle=\frac1{\sqrt{2}}\left(|00\rangle\rangle + 
|11\rangle\rangle\right)$, where $|jj\rangle\rangle=|j\rangle\otimes|j\rangle$
and we employ the standard basis made of eigenstates of $\sigma_3$ in both
Hilbert spaces. If Alice performs a
measurement of the spin in a generic direction $\sigma_\phi$ 
she may obtain one of the two possible
outcomes $\pm 1$ with equal probability  $p=\frac12$ and, correspondingly,
Bob's state is projected onto one of the two possible conditional states
$|0\rangle_\phi=\cos\phi|0\rangle+\sin\phi |1\rangle$ and 
$|1\rangle_\phi=\cos\phi|0\rangle+\sin\phi |1\rangle$. Of course, 
if Bob does not know 
the result of Alice' measurement his conditional state is given by
$\varrho^{(1)}_\smb = \frac12 \left(
|0\rangle_{\phi\phi}\langle 0| + |1\rangle_{\phi\phi}\langle 1|\right) = \frac12
\id$. Analogously, if Alice performs the measurement of $\sigma_3$, then
the reduction occurs on the states $|0\rangle$ and $|1\rangle$ but, 
without the knowledge of Alice's results, the overall 
conditional state of Bob is $\varrho^{(3)}_\smb = \frac12 \left(
|0\rangle\langle 0| + |1\rangle\langle 1|\right) = \frac12
\id$. Being $\varrho^{(3)}_\smb=\varrho^{(1)}_\smb$ the impossibility 
of discriminating the two ensembles is equivalent to the impossibility
for Bob to infer which measurement has been performed by Alice, i.e. 
it is not possible to exploit nonlocal correlations of entangled states 
to transmit information. The same argument may be easily repeated for any 
choice of the pair of measurements performed by Alice. In order to make
the two ensembles distinguishable, at least partially, Alice should send
to Bob some piece of information about the results of her measurement, 
using some traditional communication channel, thus
"saving causality".
\par
Let us now assume that Bob has at disposal a perfect quantum cloning machine
and use it, in a scheme like the right part of Fig. \ref{f:cl3},
whenever Alice performs a measurement. If Alice 
measures $\sigma_\phi$ then the state
that Bob is inserting into the cloning machine is either $|0\rangle_\phi$ 
or $|1\rangle_\phi$ with probability $p=\frac12$. The overall state at
disposal of Bob, at the output of the cloning machine and without
knowing the result of Alice's measurements, is thus given by
$$
R_\smb^{(\phi)}=\frac12 \left[
|0\rangle_{\phi\phi}\langle 0| \otimes |0\rangle_{\phi\phi}\langle 0|
+ |1\rangle_{\phi\phi}\langle 1| \otimes |1\rangle_{\phi\phi}\langle 1|
\right]\,. 
$$
If Alice measures $\sigma_3$ the line of reasoning is the same and the 
state at disposal of Bob is described by the operator
$$
R_\smb^{(3)}=\frac12 \left[
|0\rangle\langle 0| \otimes |0\rangle\langle 0|
+ |1\rangle\langle 1| \otimes |1\rangle\langle 1|
\right]\,. 
$$
Since $R_\smb^{(\phi)}\neq R_\smb^{(3)}$ it would be possible for Bob to
discriminate the two states and, in turn, to infer which measurement has 
been performed by Alice just by looking at his local state.
However, this is a clear violation of the no-signaling condition 
and, in fact, the no-cloning appeared soon after as a rebuttal of 
the FLASH proposal. Or, to say the same
in other words, a perfect cloning machine would turn ensembles
corresponding to the same statistical operators into ensembles with
different statistical operators, making them distinguishable \cite{fer06}.
\par
\begin{figure}
\includegraphics[width=0.75\textwidth]{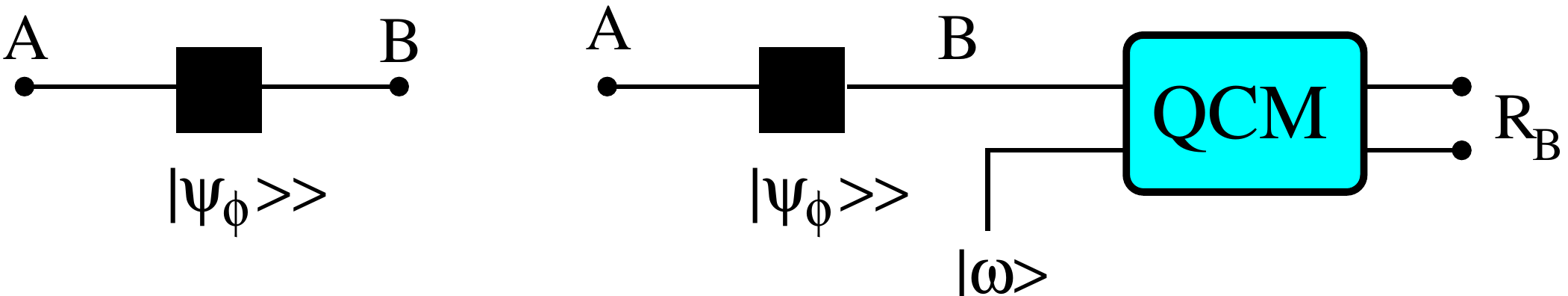}
\label{f:cl3}
\caption{
(Left): Alice and Bob share an entangled state. (Right): Schematic diagram of 
a hypothetical superluminal communication scheme exploiting a quantum
cloning machine. The use of a cloning device would allow Bob to 
discriminate which measurement has been performed by
Alice just by looking at his local state $R_\smb$.}
\end{figure}
\par
\subsection{Approximate cloning and discrimination of ensembles}
As we have seen in the previous Section, the no-signaling condition 
forbids quantum cloning and, at the same time, both exact and approximate 
discrimination of ensembles with the same statistical operator. 
On the other hand, since approximate cloning 
machine fulfilling the no-signaling condition have been suggested, one
may wonder whether this class of devices may permit approximate
discrimination of ensembles corresponding to the same statistical
operator. As we will see the answer is negative, thus showing the
full equivalence of the no-signaling condition, the no-cloning theorem
and the impossibility of discriminating ensembles with the 
same statistical operator.
\par
Assuming the reader familiar with the impossibility of perfect quantum
cloning from now on we will speak about "quantum cloning machine",
(QCM) dropping the obvious specification "approximate" (from time to time
we will forget also the term "quantum").
A generic $N\rightarrow M$ quantum cloning machine for pure states is 
a device involving a unitary operation, an ancillary system
and an apparatus system, implementing a transformation of the form
$$
|\psi\rangle^{\otimes N} \otimes |\omega\rangle^{\otimes (M-N)} \otimes
|A\rangle_\smc \stackrel{U}{\longrightarrow} |\Psi\rangle\,,
$$
and for which the partial traces 
$
\varrho_j=\Tr_{\neq j} \big[|\Psi\rangle\langle\Psi |\big]
$
possesses some similarity to the input state $|\psi\rangle$. In the
above formulas $|\psi\rangle^{\otimes N}=|\psi\rangle \otimes ... \otimes
|\psi\rangle$ denotes the $N$-fold tensor product of the state
$|\psi\rangle$ and $\Tr_{\neq j}[...]$ the partial trace over all the
systems except the $j$-th one.
In other words, one assumes to start with $N$ quantum system identically
prepared in the state $|\psi\rangle$ and aims to end up with $M$ quantum
systems prepared in some states $\varrho_j$, $j=1,...,M$, each being as close
as possible to $|\psi\rangle$ according to some figure of merit
quantifying the similarity between a pair of quantum states.
\par
The standard figure of merit, employed to assess the performances of a
cloning machine is the so-called single-clone fidelity, i.e.
$F_j=\langle\psi|\varrho_j|\psi\rangle$.  A quantum cloning
machine is said
{\em universal}: if the fidelities $F_j$ do not depend on
the input state $|\psi\rangle$, and {\em symmetric}: if $F_j=F$, 
$\forall j$.
A suitable figure of merit to globally assess a cloning machine is then 
the average of single-clone fidelities, i.e. 
$\overline{F} = \frac 1M \sum_j \int_S d\psi\, F_j (\psi)$, 
where summing over $j$ implements the average over the clones and 
the integral (being $d\psi$ a formal notation to denote a suitable
parametrization for the signals) over the set $S$ of possible input signals.
For symmetric QCM we may avoid the averaging  over the clones, and for
universal the one over the signals.
A quantum cloning machine is said
{\em optimal}: if the average fidelity is maximal,
consistently with the constraints imposed by quantum mechanics. 
The proof that a given QCM is optimal may be obtained in conjunction
with fundamental constraints. As for example an optimal QCM could be a 
device giving a fidelity which is the maximum possible without violating 
the no-signaling condition.
\par
The so-called Buzek-Hillery $1\rightarrow 2$ optimal 
cloning machine for qubit is
realized by a three-qubit unitary transformation, which acts on the signal 
basis as follows (we omit to explicitly indicate the tensor product) 
\begin{align}
|0\rangle_\sma |\omega\rangle_\smb | A\rangle_\smc
&\stackrel{U}{\longrightarrow}
\sqrt{\frac23}\, 
|0\rangle_\sma |0\rangle_\smb | 1\rangle_\smc
-\sqrt{\frac13}\, 
|\tfrac{\sigma_1}{\sqrt{2}}\rangle\rangle_{\smab} | 0\rangle_\smc
\notag\\
|1\rangle_\sma |\omega\rangle_\smb | A\rangle_\smc
&\stackrel{U}{\longrightarrow}
-\sqrt{\frac23}\,
|1\rangle_\sma |1\rangle_\smb | 0\rangle_\smc
+\sqrt{\frac13}\, 
|\tfrac{\sigma_1}{\sqrt{2}}\rangle\rangle_{\smab} | 1\rangle_\smc
\,.
\end{align}
Explicit unitaries may be written for any choice of $\omega=0,1$ and
$A=0,1$. Using the above transformations it is straightforward to see
that the generic qubit state evolves as
$$
|\psi\rangle_\sma |\omega\rangle_\smb | A\rangle_\smc
\stackrel{U}{\longrightarrow}
\sqrt{\frac23}\, 
|\psi\rangle_\sma |\psi\rangle_\smb |\psi^\ort\rangle_\smc
-\sqrt{\frac16}\, 
\Big[
|\psi\rangle_\sma |\psi^\ort\rangle_\smb 
+ |\psi^\ort\rangle_\sma |\psi\rangle_\smb 
\Big] | \psi\rangle_\smc \,,$$
where $\langle\psi|\psi^\ort\rangle=0$.
Upon taking the partial traces over the systes BC and AC respectively
one arrives at the following (identical) expression for the density
operator of the two clones
$$
\varrho_\sma = \varrho_\smb = \frac56 |\psi\rangle\langle\psi| + \frac16
|\psi^\ort\rangle\langle\psi^\ort |\:,
$$
which says that the Buzek-Hillery cloning device is symmetric and
universal and that the fidelity is given by $F=\tfrac56$.
\par
Optimality of this QCM may be proved in connection with the no-signaling 
condition, i.e. by proving that a larger fidelity would allow superluminal 
communication. To this aim let us consider the Bloch sphere representation 
of the generic input state $|\psi\rangle\langle\psi|= \frac12 \left(\id + 
\boldsymbol{r}\cdot \boldsymbol{\sigma}\right)$ and of the corresponding
clones from a universal and symmetric QCM, i.e. $\varrho_\sma=\varrho_\smb
= \frac12 \left(\id + \boldsymbol{r}^\prime\cdot\boldsymbol{\sigma}\right)$.
Since for qubits $F=\Tr[|\psi\rangle\langle\psi|\, \varrho_\sma] =
\frac12 (1+ \boldsymbol{r}\cdot\boldsymbol{r}^\prime)$ we may write
$\boldsymbol{r}^\prime=\eta \boldsymbol{r}$ where we have introduced the
so-called "shrinking factor" $0\leq\eta\leq 1$, which accounts for 
the degradation of the clones. The fidelity rewrites as $F=\frac12
(1+\eta)$ and for the Buzek-Hillery QCM we have $\eta=\frac23$.
Is this the maximum possible values? The answer is positive as it may
be proved by imposing the no-signaling condition to any scheme as the
one in the right panel of Fig. \ref{f:cl3} which employs a universal 
and symmetric QCM.  
The proof proceeds as follows: if Alice measures either $\sigma_1$ or
$\sigma_3$ and Bob use a cloning machine on his conditional state the output
states will be of the form 
$R_1=\frac12 (R_1^1 + R_1^0)$ and $R_3=\frac12 (R_3^1 + R_3^0)$.
If we impose that the $R$'s should be compatible with being the 
output of a symmetric and universal QCM i.e. such that $\Tr_\sma[R]=
\Tr_\smb[R]=\frac12(\id + \eta\, \boldsymbol{r}\cdot\boldsymbol{\sigma})$ 
and the no-signallign condition, then we end up with the condition 
$\eta\leq \frac23$. The very same threshold ensures that $R_1=R_3$, i.e.
approximate cloning {\em do not} 
turn single-qubit ensembles
corresponding to the same statistical operators into 
two-qubit distinguishable ensembles.
\section{Discrimination of seemingly equivalent ensembles}
\label{s:exa}
In this section we review some measurement schemes where discrimination
of seemingly equivalent ensembles is realized, thus leading to an
apparent contradiction.  As mentioned above, these seemingly paradoxical
situation arise when the implicit assumptions contained in introducing a
correspondence between quantum ensembles and the corresponding
single-particle statistical operator, are not satisfied. 
\subsection{Single- and many-particle density operator}
Let us consider a spin system made of $N$ particles prepared randomly 
way. The state of the system is thus described the statistical 
operator:
\begin{center}\begin{equation}\label{eq:beam}
\rho_N=\frac{1}{2^N} \sum_{j_1,\ldots, j_N} P_{j_1}\otimes\cdots\otimes
P_{j_N}\equiv\frac{1}{2^N}\mathbb{I}_{2^N}  
\end{equation}\end{center}
where $\mathbb{I}_{2^N}$ is 
the identity operator in the space of dimension $2^N$, 
$j_k=\pm 1$ $ \forall k$ and $P_{j_k}$ are projectors onto 
the subspace describing the spin of the particle $j_k$, e.g. 
eigenvectors of the Hamiltonian describing the spin of the 
single particle. Complete randomness is 
equivalent to the fact that all projectors contribute with the same weight
in the density matrix. Let us now consider the two following 
ensembles of N particles: the ensemble $E_1$contains particles 
with spin directed as $z-axis$ while $E_2$ contains particles 
with spin along $x-axis$. In formula 
\begin{align}
E_1: & |1\rangle|0\rangle|0\rangle|0\rangle\cdots\;\; random \\ 
E_2: & |\uparrow\rangle|\downarrow\rangle|\uparrow
\rangle|\uparrow\rangle\cdots\;\;random\,.
\end{align}
The two ensembles are equivalent and correspond to the same density operator
$\rho=\frac{1}{2^N} \mathbb{I}_{2^N}$.
If we look at the same systems as a collection of 
preparations of states of $m$-particle with $m<N$ then 
the density operator describing the two systems is given
by the partial trace of $\rho_N$, i.e.
$\rho_m=\frac{\mathbb{I}_{2^m}}{2^m}$. 
In particular, for ensembles of states of single-particle we have
$\rho_1=\frac{\mathbb{I}_{1}}{2}$ and thus it is not possible 
to discriminate the two ensembles neither by single-particle
measurements, nor by collective ones.
\par
Suppose now to prepare the ensemble of $N$ particles as follows: we prepare 
the first particle with spin up, the second in the state
with spin down and so on, and we separate two successive preparations by 
a time $\tau$ in order to label each particle. We then consider the
following two ensembles: $E_3$ made of particles 
with spin along $z-axis$, and $E_4$, with spin along the $x-axis$.
Explicitly 
\begin{align}
E_3:&
|1\rangle_0|0\rangle_{\tau}|1\rangle_{2\tau}|0\rangle_{3\tau}\cdots\;\;
random \\
E_4:&  |\uparrow\rangle_0|\downarrow\rangle_\tau|\uparrow\rangle_{2\tau}|
\downarrow\rangle_{3\tau}\cdots\;\; random\,.
\end{align}  
If we see these two ensembles as preparations of single-particle
states, they are equivalent, and correspond to statistical 
operator $\frac{\mathbb{I}}{2}$. However, if we look at the two
ensembles as describing preparations of many-particle states, 
then they can be discriminated by measuring some collective observable, 
e.g. $\Sigma_z = \sigma_z\otimes\mathbb{I}\otimes\cdots+
\cdots+\mathbb{I}\otimes\cdots \otimes\sigma_z$. In the following, we 
show this explicitly two- and three-particle observables. 
Eventually, we will generalize for case of N-particles. 
\par
If we look at the two ensembles $E_3$ and $E_4$ as describing a
two-particle system, we immediately realize that they correspond
to different pure states: $|1\rangle_{t}|0\rangle_{t+\tau}$ and
$|\uparrow\rangle_{t}|\downarrow\rangle_{t+\tau}$. 
The observable  $\Sigma_z$  for two particles, in the canonical basis
(eigenvectors of $\sigma_z$ and $\mathbb{I}$) corresponds to the 
matrix $\Sigma_z=\hbox{Diag}(2,0,0,-2)$ and thus we have
\begin{align}
\langle \Sigma_z\rangle_{E_3}=0 & \quad
\langle \Sigma_z^2\rangle_{E_3}=0 \\
\langle \Sigma_z\rangle_{E_4}=0 & \quad
\langle \Sigma_z^2\rangle_{E_4}=2\,,
\end{align}
making the two ensemble easily distinguishable looking at the
fluctuations of $\Sigma_z$.
\par
If we see the two ensembles $E_3$ and $E_4$ as preparations 
of three-particles system, then they correspond to the statistical
operators 
\begin{align}
\rho_{E_3} &= 
\frac12 
\Big(
|1\rangle\langle 1| \otimes |0\rangle\langle 0| \otimes |1\rangle\langle
1| + |0\rangle\langle 0| \otimes |1\rangle\langle 1| \otimes |0\rangle\langle
0|\Big) \\
\rho_{E_4}&=
\frac12 
\Big(
|\downarrow \rangle\langle \downarrow | 
\otimes |\uparrow\rangle\langle \uparrow| \otimes | \downarrow
\rangle\langle\downarrow| + |\uparrow\rangle\langle \uparrow| 
\otimes |\downarrow\rangle\langle \downarrow| \otimes |\uparrow\rangle\langle
\uparrow|\Big)
\end{align}
leading to
\begin{align}
\langle \Sigma_z\rangle_{E_3}=0 & \quad
\langle \Sigma_z^2\rangle_{E_3}=1 \\
\langle \Sigma_z\rangle_{E_4}=0 & \quad
\langle \Sigma_z^2\rangle_{E_4}=3\,,
\end{align}
More generally, looking at the two ensembles as preparations of 
$m$-particle systems, then we have 
\begin{align}
\langle \Sigma_z\rangle_{E_3}=0 & \quad
\langle \Sigma_z^2\rangle_{E_3}=1 \\
\langle \Sigma_z\rangle_{E_4}=0 & \quad
\langle \Sigma_z^2\rangle_{E_4}=m\,,
\end{align}
when $m$ is odd, and
\begin{align}
\langle \Sigma_z\rangle_{E_3}=0 & \quad
\langle \Sigma_z^2\rangle_{E_3}=0 \\
\langle \Sigma_z\rangle_{E_4}=0 & \quad
\langle \Sigma_z^2\rangle_{E_4}=m\,,
\end{align}
if $m$ is even.
\par
Looking naively at the above measurement scheme, one may conclude that
it may be used to discriminate two equivalent ensembles, even if 
$m\rightarrow\infty$. This is
definitely not the case: no single-particle measurement may be used 
to reveal differences between the two ensemble, in agreement with the
fact that they are described by the same single-particle density
operator. On the other hand, the two ensembles are prepared with 
inner correlations, i.e. correlations between subsequent preparations 
and thus they are more properly described by many-particle density 
operators, which are no longer identical, thus leaving room for
discrimination. Implications of these findings on the
conclusion that there is no quantum entanglement in the current nuclear
magnetic resonance quantum computing experiment \cite{bra99} have been discussed
\cite{lon06}.
\subsection{Finite ensembles}
Let us now address a situation involving polarization of photons and
consider two ensembles describing photons prepared either with linear or
circular polarization. In particular, we consider two equivalent
ensembles made of $N$ randomly prepared photons such that we know that
exactly $\frac12 N$ photons are prepared in one of the two states (say
linear vertical or clockwise circular) and $\frac12 N$ in the
complementary one (linear horizontal or anticlockwise circular).
If we take a photon from these ensembles we have probability $\frac12$ 
of finding a vertical/clockwise polarized photon and $\frac12$ of finding a
horizontal/anticlockwise polarized photon. Given a photon, we do not know
the state in which the photon is, but we know only the probability to
find a photon in a
state. The state of the system is thus described by the two
ensembles
\begin{align}
E5: & \Big\{\frac12, |0\rangle; \frac12, |1\rangle\Big\} \\ 
E6: & \Big\{\frac12, |+\rangle; \frac12, |-\rangle\Big\} 
\end{align}
where $|\pm\rangle= \frac{1}{\sqrt{2}} ( |0\rangle \pm |1\rangle )$. 
The two ensembles are equivalent, i.e. correspond to the same 
statistical operator $\frac12 \mathbb{I}$, and thus they cannot be
discriminated by single-particle measurements.
\par
Assume to make the following experiment: take a filter that let 
only photons polarized in $|0\rangle$ to pass. For photons prepared
according to the ensemble $E_5$ we have that $\frac12 N$ particles 
pass the filter while the rest will be blocked. For the ensemble $E_6$, 
each photon has probability $\frac12$ to pass through filter, and thus 
the number of photons that will pass is governed by the 
following statistics: 
\begin{equation}\label{eq:bernoulli}
P(N,m,p)=\binom{N}{m} p^m(1-p)^{N-m}
\end{equation}
where $m$ is the number of photons after the filter and $p$ the
survival probability for each photon, $p=\frac12$ in this case. 
In order to make the two ensembles indistinguishable, 
the number of photons after the filter should be $\frac12 N$ also for 
$E_6$, i.e., according to \eqref{eq:bernoulli} 
\begin{equation}
\label{pm}
P(N,\frac12 N,1/2)=
\binom{N}{\frac12 N}\, \left(\frac{1}{2}\right)^N 
\stackrel{N\gg1}{\simeq} \sqrt{\frac{2}{\pi N}}\,.
\end{equation}
Eq. (\ref{pm}) shows that the probability of having $\frac12 N$ 
particles after the filter decreases with $N$ and thus
the probability of discriminating 
the two ensembles increases with the number of particles.
\par
The mistake in this case is slightly harder to find. After all, 
one may say, in this case we have prepared the system randomly and 
we have not imposed any specific succession of states for the 
photons. 
However, what we have used to discriminate the two ensembles
is the fact that we know 
that {\em exactly} half of photons are in a state and half
in the other, and not just {\em half in average} as it would have been 
from the knowledge of the single-particle density matrix only.
This is equivalent to assume the presence of correlations within the
ensemble, such that the preparation of the system cannot be properly
described by single-particle statistical operator.
\section{Conclusions}
\label{s:out}
We have reviewed in details the concepts of quantum ensemble and
statistical operator, emphasizing the implicit assumptions contained in
introducing a correspondence between quantum ensembles and the
corresponding single-particle statistical operator.  We have then
discussed some issues arising when these assumptions are not satisfied,
illustrating some examples of practical where different (but equivalent)
preparations of a quantum system, i.e. different ensembles corresponding
to the same single-particle statistical operator, may be successfully
discriminated exploiting multiparticle correlations, or some a priori
knowledge about the number of particles in the ensemble. Besides, 
we have briefly discussed the connection between the no-cloning 
theorem and the impossibility of discriminating equivalent ensembles 
and shown that also approximate approximate quantum cloning 
machines cannot be used for this task. 
Overall, it appears that discrimination of equivalent preparations is 
indeed possible, but also that the involved ensembles correspond to the same
single-particle density operator but different many-particles ones.  In
other words, there are no paradoxes unless the measurement schemes are
analyzed in naive way.

\end{document}